\documentclass[12pt,english]{article}
	\usepackage{float}
	\usepackage{setspace}
	\usepackage{amsthm}
	\usepackage{amsmath}
	\usepackage{amssymb}
	\usepackage{graphicx}
	\usepackage[square, numbers]{natbib}
	\usepackage{color}
    \usepackage{parskip}
	\usepackage[unicode=true,pdfusetitle, bookmarks=true,bookmarksnumbered=false,bookmarksopen=false, breaklinks=false,pdfborder={0 0 1},colorlinks=true]{hyperref}
	\hypersetup{citecolor=blue, colorlinks=true}
	\usepackage{lineno}
	
	\newtheorem{proposition}{Proposition}

\newcommand{\beginsupplement}{%
	\setcounter{table}{0}
	\renewcommand{\thetable}{S\arabic{table}}%
	\setcounter{figure}{0}
	\renewcommand{\thefigure}{S\arabic{figure}}%
\renewcommand{\thesection}{S\arabic{section}}   
\renewcommand{\thetable}{S\arabic{table}}   
\renewcommand{\thefigure}{S\arabic{figure}}

}

 \begin{document}
 \singlespacing

 \title {Interpreting Economic Complexity }
 \author{
 Penny Mealy\thanks{Institute for New Economic Thinking at the Oxford Martin School, Smith School for Enterprise and the Environment, St Edmund Hall Oxford, OX2 6ED, United Kingdom.}
 \and 
 J. Doyne Farmer\thanks{Institute for New Economic Thinking at the Oxford Martin School, Department of Mathematics, Christ Church College, Oxford, OX2 6ED, United Kingdom.}
 \and
 Alexander Teytelboym\thanks{Department of Economics, St Catherine's College, Institute for New Economic Thinking at the Oxford Martin School, Oxford, OX1 3UQ, United Kingdom.}
 }

 \date{\today}

\maketitle

\begin{abstract}\footnotesize

Two network measures known as the Economic Complexity Index (ECI) and Product Complexity Index (PCI) have provided important insights into patterns of economic development. We show that the ECI and PCI are equivalent to a spectral clustering algorithm that partitions a similarity graph into two parts. The measures are also related to various dimensionality reduction methods and can be interpreted as vectors that determine distances between nodes based on their similarity. Our results shed a new light on the ECI's empirical success in explaining cross-country differences in GDP/capita and economic growth, which is often linked to the diversity of country export baskets. In fact, countries with high (low) ECI tend to specialize in high (low) PCI products. We also find that the ECI and PCI uncover economically informative specialization patterns across US states and UK regions.

\end{abstract}
\newpage

\section{Introduction}
\onehalfspacing

Structural properties of the global trade network can explain differences in economic development across countries \cite{hidalgo2009building, tacchella2012new, caldarelli2012network, hausmann2014atlas,  saracco2015randomizing, straka2017grand}.
A novel pair of measures known as the Economic Complexity Index (ECI) and the Product Complexity Index (PCI) were recently introduced to infer information about countries' productive capabilities from their export baskets \cite{hidalgo2009building, hausmann2014atlas}. These measures have been particularly successful in explaining cross-country differences in GDP/capita and in predicting economic growth. However, the precise mathematical and economical interpretations of these indices have been elusive.

In this paper, we show that the economic complexity measures are mathematically equivalent to a classic spectral clustering algorithm, which partitions a similarity graph into two, balanced components that are internally similar and externally dissimilar \cite{shi2000normalized}. The ECI and PCI can also be interpreted as dimensionality reduction methods, which have close connections to diffusion maps \cite{nadler2006diffusion} and correspondence analysis \cite{mardia1979multivariate,greenacre1984correspondence,yen2011link,beh2014correspondence,hill2014correspondence}

From a dimensionality reduction perspective, the measures define a distance between nodes in a graph based on their similarity. For example, when applied to export data, the ECI (PCI) places countries (products) on a one-dimensional interval such that countries (products) with similar exports (exporters) are close together and countries (products) with dissimilar exports (exporters) are far apart.




Our mathematical interpretations contrast previous conceptual descriptions of the economic complexity measures, which tended to frame the ECI as being related to the diversity (or number) of products a country is able to export competitively \cite{hidalgo2009building, hausmann2014atlas, morrison2017economic, gao2018quantifying}. 
Not only is the ECI mathematically orthogonal to diversity \cite{kemp2014interpretation}, we show that it captures unique, economically insightful information that diversity does not make apparent. When applied to export data, the ECI and PCI reveal a striking pattern of specialization across countries.  High ECI countries (which tend to be richer) specialize in high PCI products, while countries with low PCI (which tend to be poorer) specialize in low PCI products. Moreover, the export baskets of high ECI countries are more homogeneous than the exports baskets of low ECI countries. Hence, while diversity provides information about how many products countries are competitive in, the ECI and PCI shed light on what type of products high- and low-income countries specialize in.

Our results also allow us to extend the ECI and PCI to datasets other than trade data. We provide an illustration with regional data on industrial employment concentrations in UK local authorities and occupational employment concentrations in US states. We find that, remarkably, the ECI for UK local authorities and US states is strongly correlated with regional earnings per capita. Moreover, we show that the ECI and PCI reveal similar patterns of specialization, while diversity fails to be economically informative.

\section{The ECI and PCI}\label{sec:indices}
The ECI and PCI measures are calculated using an algorithm that operates on a binary country-product matrix $M$ with elements $M_{cp}$, indexed by country $c$ and product $p$ \cite{hidalgo2009building}. $M_{cp}=1$ if country $c$ is \textit{competitive} or has a \textit{revealed comparative advantage} (RCA) $> 1$ in product $p$, where RCA is calculated using the Balassa index \cite{balassa1965trade}, given by

\begin{equation}\label{eq:RCA}
RCA_{cp} = \frac{ x_{cp}/ \sum_px_{cp}} { \sum_{c}x_{cp}/ \sum_{c}\sum_p x_{cp} },
\end{equation}
where $x_{cp}$ is country $c$'s exports of product $p$. $M_{cp} = 0$ otherwise. 


Summing across the rows and columns of M gives a country's \textit{diversity} (denoted $k_{c}^{(0)}$) and product \textit{ubiquity} (denoted $k_{p}^{(0)}$), defined as

\begin{equation}
k_{c}^{(0)} = \sum_p M_{cp}
\label{eq:diversity}
\end{equation}
and
\begin{equation}
k_{p}^{(0)} = \sum_c M_{cp}.
\label{eq:ubiquity}
\end{equation}
The ECI and PCI were originally defined through an iterative, self-referential \textit{Method of Reflections} algorithm which first calculates diversity and ubiquity and then recursively uses the information in one to correct the other \cite{hidalgo2009building} (see Methods). However, it can be shown \cite{caldarelli2012network,cristelli2013measuring} that the Method of Reflections is equivalent to finding the eigenvalues of a matrix $\widetilde{M}$, whose rows and columns correspond to countries and whose entries are given by

\begin{equation}\label{eq:Mcc2}
\widetilde{M}_{cc'} \equiv \sum_{p}\frac{M_{cp}M_{c'p}}{k_{c}^{(0)}k_{p}^{(0)}}=\frac{1}{k_{c}^{(0)}}\sum_{p}\frac{M_{cp}M_{c'p}}{k_{p}^{(0)}}.
\end{equation}

Equivalently, we can write $\widetilde{M}$ in matrix notation
\begin{equation}\label{eq:DMUM}
\widetilde{M} = D^{-1}MU^{-1}M',
\end{equation}
where $D$ is the diagonal matrix formed from the vector of country diversity values and $U$ is the diagonal matrix formed from the vector of product ubiquity values. 

When applied to country trade data one can think of $\widetilde{M}$ as a diversity-weighted (or normalized) similarity matrix, reflecting how similar two countries' export baskets are. 

Further, from Eq. \eqref{eq:DMUM}, we can see that 
\begin{equation}\label{eq:DS}
\widetilde{M} =D^{-1}S,
\end{equation}

where $S=MU^{-1}M'$ is a symmetric similarity matrix in which each element $S_{cc'} $ represents the  products that country $c$ has in common with country $c'$, weighted by each the inverse of each product's ubiquity. 

Since $\widetilde{M}$ is a row-stochastic matrix (its rows sum to one) its entries can also be interpreted as conditional transition probabilities in a Markov transition matrix \cite{hidalgo2009building,kemp2014interpretation}. The ECI is defined as the eigenvector associated with the \textit{second-largest} right eigenvalue of $\widetilde{M}$. This eigenvector determines a ``diffusion distance" between the stationary probabilities of states reached by a random walk described by this Markov transition matrix (see the \textbf{SM}).

The PCI is symmetrically defined by transposing the country-product matrix $M$ and finding the second-largest right eigenvalue of $\widehat{M}$, given by
\begin{equation}\label{eq:UMDM}
\widehat{M} = U^{-1}M'D^{-1}M.
\end{equation}

In this paper, we denote the ECI vector by  $\widetilde{y}^{[2]}$ and the ECI of country $c$ is denoted $\widetilde{y}^{[2]}_c$. We also denote the diversity vector by $d$ where $d_c=k_c^{(0)}$ is the diversity of country $c$. Additionally, we note that the ECI is commonly standardized by subtracting the mean and dividing by the standard deviation to allow for comparisons across years \cite{hidalgo2009building,hausmann2014atlas}. However, for clarity, we use the unstandardized ECI vector throughout this paper. 

\section{Results}

The ECI has commonly been described with reference to diversity. This follows from the hypothesis that originally motivated the measure's construction: prosperous countries are likely to be able to competitively a diverse set of products that few other countries are likely to be able to export competitively \cite{hidalgo2009building,hausmann2014atlas}. Recent papers have since described the ECI as an ``indicator of diversity'' \cite[p. 1]{morrison2017economic} and a ``measure of economic diversity'' \cite[p. 1596]{gao2018quantifying}. However, the ECI has been shown to be mathematically orthogonal to diversity \cite{kemp2014interpretation}. That is, the dot product of the diversity and ECI vectors is zero.


The ECI has also been described as a ``standard eigenvalue centrality algorithm'' \citep[p. 1]{morrison2017economic}. However, this description is also inaccurate, as in contrast to the ECI, eigenvector centrality is defined as the eigenvector corresponding to the largest eigenvalue of a symmetric adjacency matrix, such as $S$.\footnote{In the case of directed networks (like $\widetilde{M}$), since the right eigenvector corresponding to the largest eigenvalue is constant,	the natural definition is to take the left eigenvector corresponding to the largest eigenvalue of the adjacency matrix \cite[p. 178]{newman2010networks} (Note that in \cite{newman2010networks}, the exposition of the adjacency matrix is transposed). Moreover, since the rows $\widetilde{M}$ have been normalized by diversity, the leading left eigenvector (eigenvector centrality) will be proportional to diversity and consequently does not add any further information to what we already know about $\widetilde{M}$.}



\subsection{Interpretation as spectral clustering}
We now show that the ECI is mathematically equivalent to a standard spectral clustering method for partitioning an undirected weighted graph, represented by an adjacency matrix $S$, into two balanced components \cite{shi2000normalized}. 

Spectral clustering is a widely used technique for community detection and dimensionality reduction and has a range of diverse applications including image recognition, web page ranking, information retrieval and RNA motif classification. The goal of the spectral clustering approach is to minimize the sum of edge weights cutting across the graph partition, while making the size (number of nodes) of the two components relatively similar. As we discuss below, finding the exact solution to this problem is NP-hard. However, it is possible to obtain an approximate solution by minimizing the \textit{normalized cut (Ncut) criterion} \cite{shi2000normalized}.
We demonstrate that the ECI is equivalent to this approximate solution. 

\subsubsection{The $Ncut$ criterion}
Consider an undirected graph $G = (V, E)$ with vertices $V$ and edges $E$.  We allow the graph $G$ to be weighted, with non-negative weights so the adjacency matrix entries are $S_{ij} \ge 0$ where $S_{ij} = S_{ji}$.  While the export matrix is one possible example, we can consider $S$ to be any matrix with these properties.   The degree of vertex $i$ is defined as
\begin{equation}\label{eq:degrees}
d_i = \displaystyle\sum_{j \in V} S_{ij},
\end{equation}
and the size or ``volume" of a set of vertices $A\subseteq V$ can be measured as
\begin{equation}\label{eq:vol}
vol(A) = \displaystyle\sum_{i\in A}d_{i}.
\end{equation}

(Our notation is deliberate: as we show in the \textbf{SM}, if the adjacency matrix $S$ of the similarity graph $G$ coincides with export similarity matrix $S=D\widetilde{M}$ then degree $d_i$ corresponds precisely to the diversity of a country's exports). One way to partition a graph into two disjoint sets is by solving the {\it cut} problem. The objective is to find a partition of $V$ into complementary sets $A$ and $\bar{A}$ that minimize the number of links between the two sets.  The $cut$ problem is to find the minimum of
\begin{equation}\label{eq:cut1}
cut(A,\bar{A}) = \displaystyle\sum_{i\in A,j\in \bar{A}}S_{ij}.
\end{equation}

This objective function has the undesirable property that its solution often partitions a single node from the rest of the graph. To avoid this problem, the \textit{normalized cut} ($Ncut$) \textit{criterion} \cite{shi2000normalized} penalizes solutions that are not properly balanced. The objective is to partition the graph in such a way that each cluster contains a reasonable number of vertices.  This can be achieved by minimizing the objective function
\begin{equation}\label{eq:cut2}
Ncut(A,\overline{A}) = (\frac{1}{vol(A)} + \frac{1}{vol(\overline{A})} ) \displaystyle\sum_{i\in A,j\in \bar{A}}S_{ij}.
\end{equation}

Let $D$ be the diagonal degree matrix with $D_{ii} = d_i$ and $D_{i \ne j} = 0$. Then finding the minimum value of $Ncut$ is equivalent to solving the optimization problem
\begin{equation}
\label{alternateObjectiveFunction}
\min_ A Ncut(A, \bar{A}) = \min_y\frac{y^T(D-S)y}{y^TDy},
\end{equation}
subject to $y_i \in \left\{1, -vol(A)/vol(\bar{A}) \right\}$ and $y^TD\mathbf{1}=0$.  

Due to the fact that $y_i$ is restricted to one of two possible values, this is not a simple linear algebra problem, and finding the true minimum of the $Ncut$ criterion has been shown to be NP-hard \cite{shi2000normalized}.  However, by letting $y_i$ take on any real value, an approximate solution can be obtained by finding the eigenvector $y^{[2]}$ corresponding to the second-smallest eigenvalue of the generalized eigenvalue equation
\begin{equation}\label{eq:eigvalue}
(D-S)y = \lambda Dy.
\end{equation}
Recall that $L_S=D-S$ is called the {\it Laplacian} matrix of $S$.  By making the substitution 
\begin{equation}
\label{substitution}
y = D^{-1/2} z,
\end{equation}
this can be rewritten as a standard eigenvalue equation
\begin{equation}
\label{normalizedLaplacian}
D^{-\frac{1}{2}}(D-S)D^{-\frac{1}{2}}z = \overline{L}_S z = \lambda z,
\end{equation}
where $\overline{L}_S=D^{-\frac{1}{2}}(D-S)D^{-\frac{1}{2}}$ is the \textit{normalized Laplacian} of $S$. Because the normalized Laplacian is a stochastic matrix, its smallest eigenvalue is zero. The eigenvector $z^{[2]}$ associated with the second-smallest eigenvalue of $\overline{L}_S$ is called the {\it normalized Fiedler vector}, and is a solution to the standard eigenvalue equation in Eq.~(\ref{normalizedLaplacian}).
%
%
Transforming back to $y$ using Eq.~(\ref{substitution}) to solve the original problem gives the solution
\begin{equation}
\label{NcutSolution}
y^{[2]} = D^{-1/2} z^{[2]}.
\end{equation}
The solution $y^{[2]}$ provides a useful approximate solution that minimizes the normalized cut criterion and is equal to a simple transformation of the normalized Fiedler vector \cite{shi2000normalized}.

\subsubsection{The relationship between the ECI and the $Ncut$ criterion}
Recall that $\widetilde{M}$ is the matrix whose eigenvector corresponding to the second-largest eigenvalue is the ECI. To see the relationship between spectral clustering and the ECI, note that the similarity matrix $S = D \widetilde{M}$ characterising country export similarity is in the same form used to minimize the normalized cut criterion. 
Multiplying both sides of Eq.~(\ref{normalizedLaplacian}) by $D^{-\frac{1}{2}}$ and re-arranging terms gives
\begin{equation}
D^{-1} S D^{-\frac{1}{2}} z = (1 - \lambda) D^{-\frac{1}{2}} z.
\end{equation}
Substituting $\widetilde{M} = D^{-1}S$ gives
\begin{equation}\label{eq:eigentrans}
\widetilde{M} D^{-\frac{1}{2}} z = (1 - \lambda) D^{-\frac{1}{2}} z.
\end{equation}
The eigenvalue equation for $\widetilde{M}$ is
\begin{equation}\label{eq:eigen}
\widetilde{M} \widetilde{y} = \widetilde{\lambda} \widetilde{y}.
\end{equation}
Now, comparing Eqs.~\eqref{eq:eigentrans} and \eqref{eq:eigen}, we can see that the eigenvalues and eigenvectors of $\widetilde{M}$ are related to those of $\overline{L}_S$ by
\begin{eqnarray}
\widetilde{\lambda}&=& 1 - \lambda,\text{ and}\\
\widetilde{y}&=& D^{-\frac{1}{2}} z.
\end{eqnarray}
Thus the second-smallest eigenvalue of  $\overline{L}_S$  corresponds to the second-largest eigenvalue of $\widetilde{M}$, and comparison to Eq.~(\ref{NcutSolution}) makes it clear that the \textit{ECI is equivalent to the spectral clustering solution of the normalized cut criterion}, i.e.

\begin{equation}\label{ECIisNcut}
\widetilde{y}^{[2]} = y^{[2]}=D^{-\frac{1}{2}} z^{[2]},
\end{equation}
where $\widetilde{y}^{[2]}$ represents the second largest eigenvector of $\widetilde{M}$.

That is, the ECI ($\widetilde{y}^{[2]}$) is equivalent to the approximate solution ($y^{[2]}$) that minimizes the normalized cut criterion. Moreover, the ECI is related to the normalized Fiedler vector by a simple transformation. In the \textbf{SM}, we also show how this interpretation can be applied to the PCI, and describe the mathematical relationship between the ECI and PCI. 

\subsubsection{Applying the spectral clustering interpretation to economic data}

We now demonstrate how the  ECI partitions similarity networks in practice. A visual illustration is shown in Panel A of Figure \ref{fig:Sgraphs}. Here we have calculated the ECI for a randomly generated similarity graph with two clear components. The ECI assigns each node a real number on an interval with positive and negative values according to their similarity to each other. On the left plot of Panel A, we show the ECI values associated with each node in ascending order. The graph should be partitioned where ECI is zero.  Nodes with a positive ECI are assigned to one cluster and nodes with a negative ECI go into the other cluster. In this case, the distinct gap in the ECI values shows that the partition is very clear. On the right plot of Panel A, we show the network's adjacency matrix $S$, where we have also ordered the rows and columns in accordance with the ascending ECI values. Here one can also see how the ECI ordering reveals the graph's two clear components. 

\begin{figure}[H]
	\centering
	\includegraphics[width=1\textwidth]{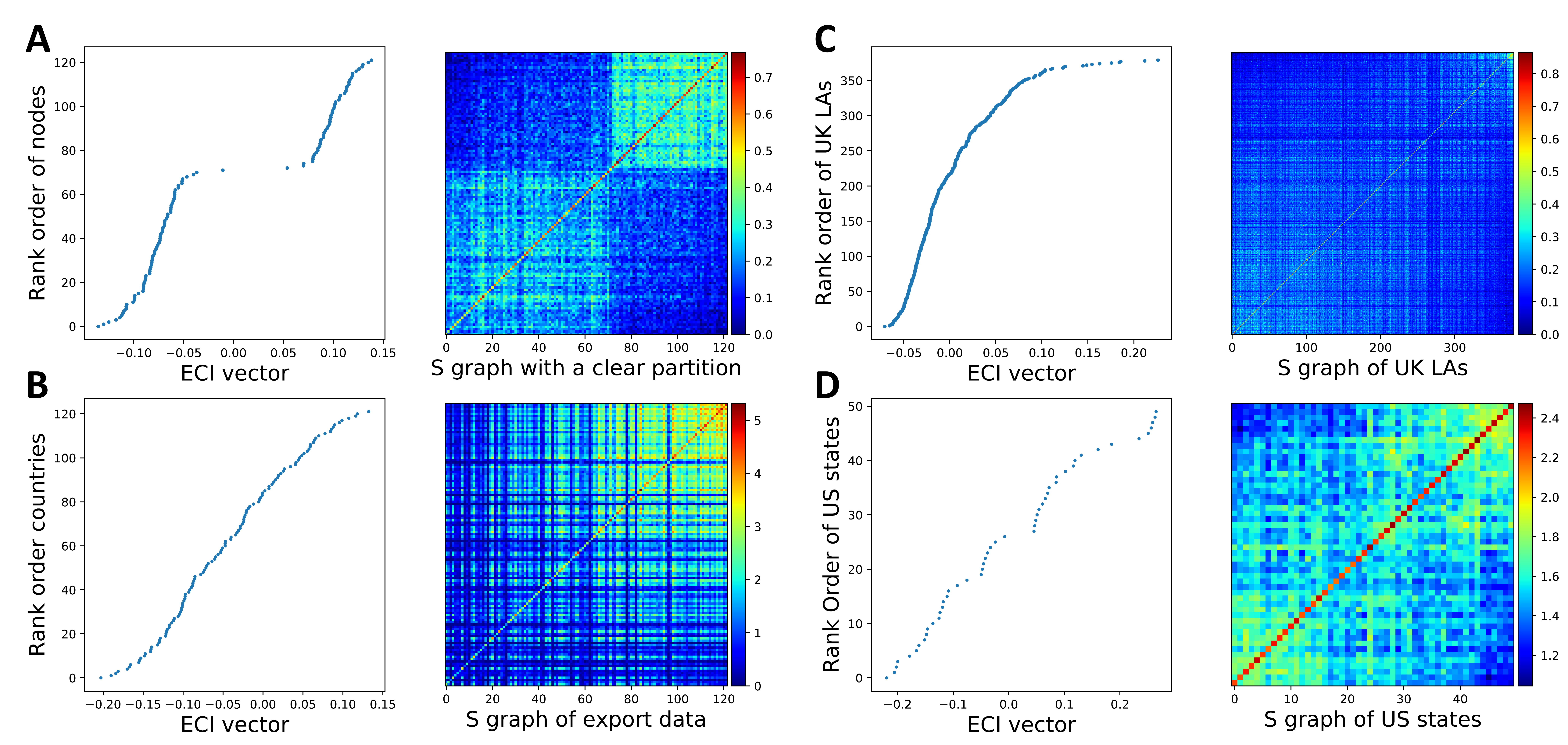}
	\caption{Each panel shows the ECI vector (in ascending order) on the left plot and the associated similarity matrix $S$ on the right plot, where rows and columns have been ordered by the ECI and colored by the $S_{ij}$ values. Panels correspond to similarity networks based on:
		\textbf{Panel A)} randomly generated data with two clear components; \textbf{Panel B)} HS6 COMTRADE data for 2013; \textbf{Panel C)} data on employment concentrations in different industries in  UK local authorities; \textbf{Panel D)} data on employment concentrations in different occupations in US states.}
	\label{fig:Sgraphs}
\end{figure}

In Panel B, we show the same for export data (based on HS6 COMTRADE data for the year 2013). In the left plot, country ECI values (sorted in ascending order) do not show a clear gap across the zero threshold. Moreover, the plot on the right suggests that while countries with high ECI values have a high degree of similarity in their exports (as shown by the higher $S_{ij}$ values), countries with low ECI values appear to have more varied export portfolios. These plots therefore indicate that the export data do not partition cleanly into two components. 

In Panel C and D, we apply the ECI to two other similarity networks constructed from regional data for the UK and US.  Panel C shows a similarity graph constructed on the basis of regional data from the UK Business Register and Employment Survey (BRES) for the year 2011 (available from \url{https://www.nomisweb.co.uk/}). Here, nodes are UK local authorities, which are similar to each other on the basis of their employment concentrations in different industries (classified at the three-digit level of granularity). The similarity graph in Panel D is constructed from regional data sourced from the Integrated Public Use Microdata Series (IPUMS) \cite{ruggles2017ipums} for the year 2010 (available from \url{https://usa.ipums.org/usa/}). In this graph, nodes are US states, and similarity is calculated on the basis of employment concentrations in different occupations (also classified at the three-digit level of granularity). More detail about the construction of these networks can be found in the Materials and Methods section.

Interestingly, in both of these examples, the data also do not partition clearly into two components. However, as we show in the next section, the ECI and PCI  nonetheless provides economically insightful information.

\subsection{Interpretation as a dimensionality reduction tool}\label{sec:dimred}
In addition to approximating the normalized cut criterion, the ECI can be interpreted as a dimensionality reduction tool. As Shi and Malik \cite{shi2000normalized} show, the ECI exactly minimizes 
\begin{equation}
\label{alternative2}
\frac{\sum_{ij} \left( y_i - y_j \right)^2 S_{ij}} {\sum_i y_i^2 d_i},
\end{equation}
subject to the constraint
\begin{equation}
\label{orthog}
\sum_i y_i d_i = 0.
\end{equation}

Here, the objective is to find real numbers $y_i$ for each node $i$ that minimize the sum of the squared distances between nodes, where the distances are weighted according to the similarity matrix $S$. The constraint ensures that the assigned $y_i$ numbers take on positive and negative values and are reasonably balanced in their distribution above and below zero. As we will discuss further in section \ref{sec:revisit}, it also hard-wires the orthogonality condition between the ECI and diversity vectors. 

When applied to export data, we can interpret the ECI as a method to collapse the high-dimensional space of country-export similarities into one dimension. The ECI positions countries on an interval where similar countries are placed close together and dissimilar countries are placed far apart. The distance between countries on this line is a special case of the ``diffusion map distance'' \cite{nadler2006diffusion} and is closely related to correspondence analysis (see \cite{yen2011link} and the \textbf{SM}).

The application of economic complexity measures to export data is particularly interesting from an economic perspective because the ECI strongly correlates  with countries' per capita GDP and future growth rates \cite{hidalgo2009building,hausmann2014atlas}. However, as we show in Figure \ref{fig:eci_gdp}, the ECI yields economically insightful information beyond export data (Panel A). Panel B shows that the ECI for UK local authorities is also correlated with per capita earnings and Panel C shows that the ECI for US states correlates with state-level per capita GDP\footnote{UK earnings data is sourced from the UK Office for National Statistics Annual Survey of Hours and Earning and US state-level per capita GDP data is sourced from the US Bureau of Economic Analysis.}.

\begin{figure}[H]
	\centering
	\includegraphics[width=1\textwidth]{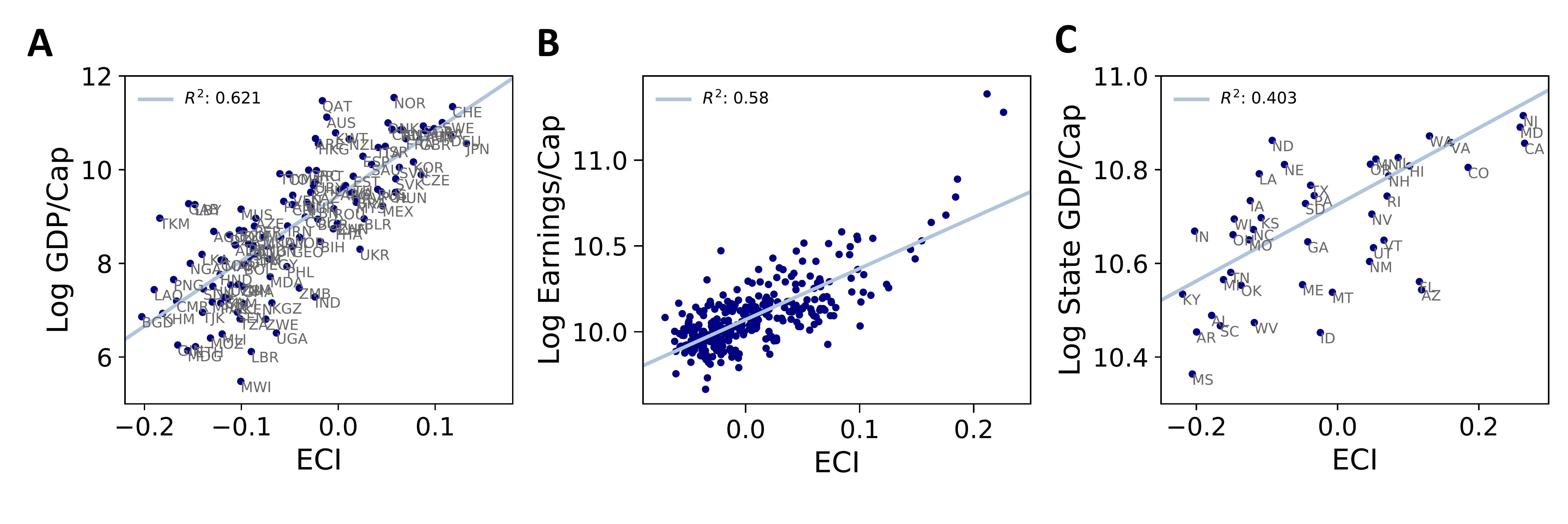}
	\caption{\textbf{Panel A)} Relationship between the ECI and log GDP per capita for data on countries and exports. \textbf{Panel B)} Relationship between the ECI and log per capita earnings for data on industrial employment concentrations in UK local authorities. As the scatter plot is too tightly clustered to show legible local authority labels, we provide the top and bottom 10 local authorities ranked by their ECI in the \textbf{SM}. \textbf{Panel C)} Relationship between the ECI and log GDP per capita for for data on occupational employment concentrations in US States. }
	\label{fig:eci_gdp}
\end{figure}

Applying the PCI to the country export space provides additional economic insights. Analogous to the ECI, the PCI is defined as the  eigenvector associated with the second largest eigenvalue of the transpose of the $\widetilde{M}$ matrix. Hence, the PCI places products along a one-dimensional interval such that products exported by the same countries are close together and products exported by different countries are far apart. Moreover, as the ECI of a country is equal to the average of the PCI of products the country is competitive in (see the \textbf{SM}), the PCI sheds light on the type of exports that countries have in common.

 We provide an illustration in Figure \ref{fig:banding}, which shows the $M$ (bipartite) matrix for countries and exports (Panel A), UK local authorities and industries (Panel B) and US states and occupations (Panel C). In all three cases, we order the country or region rows in accordance with their corresponding ECI (sorted in ascending order). We also order the export, industry and occupation columns by their corresponding PCI (also sorted in ascending order). 

\begin{figure}[H]
	\centering
	\includegraphics[width=1\textwidth]{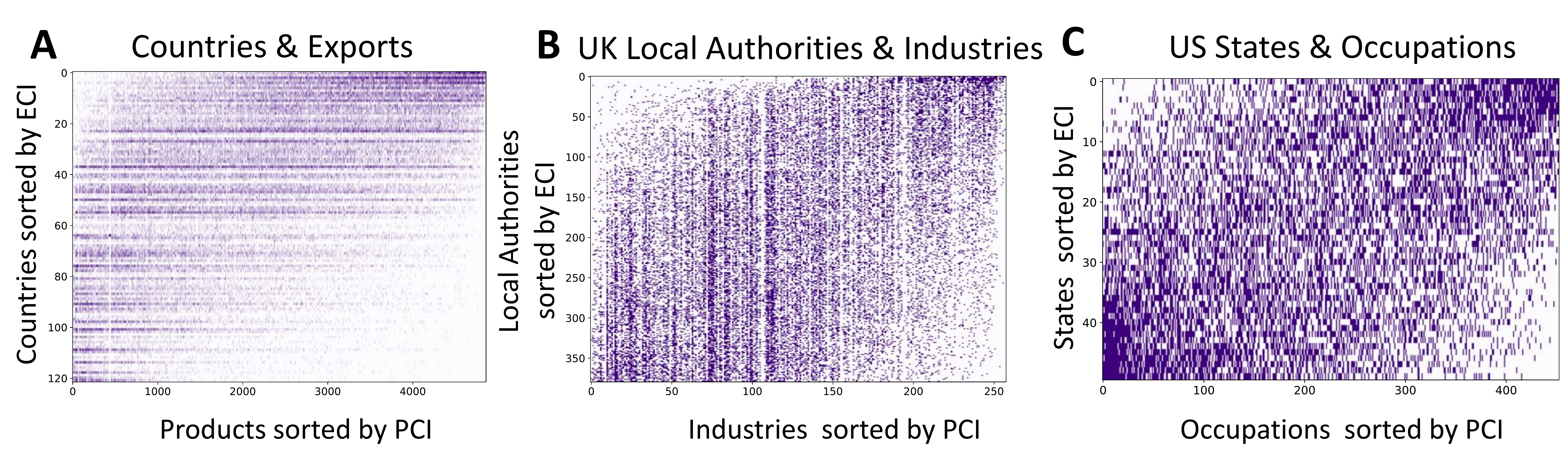}
	\caption{In each matrix, rows are sorted by the ECI and columns are sorted by the PCI: \textbf{Panel A)} country-product $M$ matrix; \textbf{Panel B)} UK region-industry $M$ matrix; \textbf{Panel C)} US state-occupation $M$ matrix.}
	\label{fig:banding}
\end{figure}

Panel A reveals a striking pattern of specialization in the export data. By simultaneously looking at ECI and PCI we can infer that rich and poor countries differ systematically according to the types of exports they are competitive in. Richer (poorer) countries with high (low) ECI specialize in high (low) PCI products. Remarkably, similar specialization patterns are also evident in the UK and US regional data. In the \textbf{SM}, we show the top and bottom local authorities and US states ranked by ECI, as well as the top and bottom industries and occupations ranked by PCI. In the UK, high (low) ECI local authorities tend to be urban (rural) areas specialised in high (low) PCI industries relating to financial and professional (agricultural and manufacturing) industries. We find similar results for the US. Hence, by interpreting the ECI and PCI as similarity measures, we are able to uncover new economic insights from well-known and new datasets. 

\subsection{Revisiting previous interpretations of economic complexity}\label{sec:revisit}

Previous interpretations of the ECI have tended to be cast in terms of diversity \cite{hidalgo2009building,hausmann2014atlas, morrison2017economic,gao2018quantifying}, even though the ECI and diversity are mathematically orthogonal (see Eq.~\eqref{orthog} and \cite{kemp2014interpretation}). However, in the country-export data (see Panel of A of Figure \ref{fig:div_eci}) as well as in Chinese regional data \cite{gao2018quantifying}, diversity and the ECI turn out to be positively correlated. Recall that orthogonality (having a zero dot product) does not imply zero correlation unless the mean of one of the variables is zero. Neither diversity nor the (unstandardized) ECI have zero means in these data. Indeed, as we show in Panel B and C of Figure \ref{fig:div_eci}, the empirical relationship between the ECI and diversity is different in the UK and US regional data. Despite being positively correlated with regional per capita earnings (Figure \ref{fig:eci_gdp}), the ECI is negatively correlated with  industrial diversity of UK local authorities and has no significant correlation with occupational diversity of US states.

\begin{figure}[H]
	\centering
	\includegraphics[width=1\textwidth]{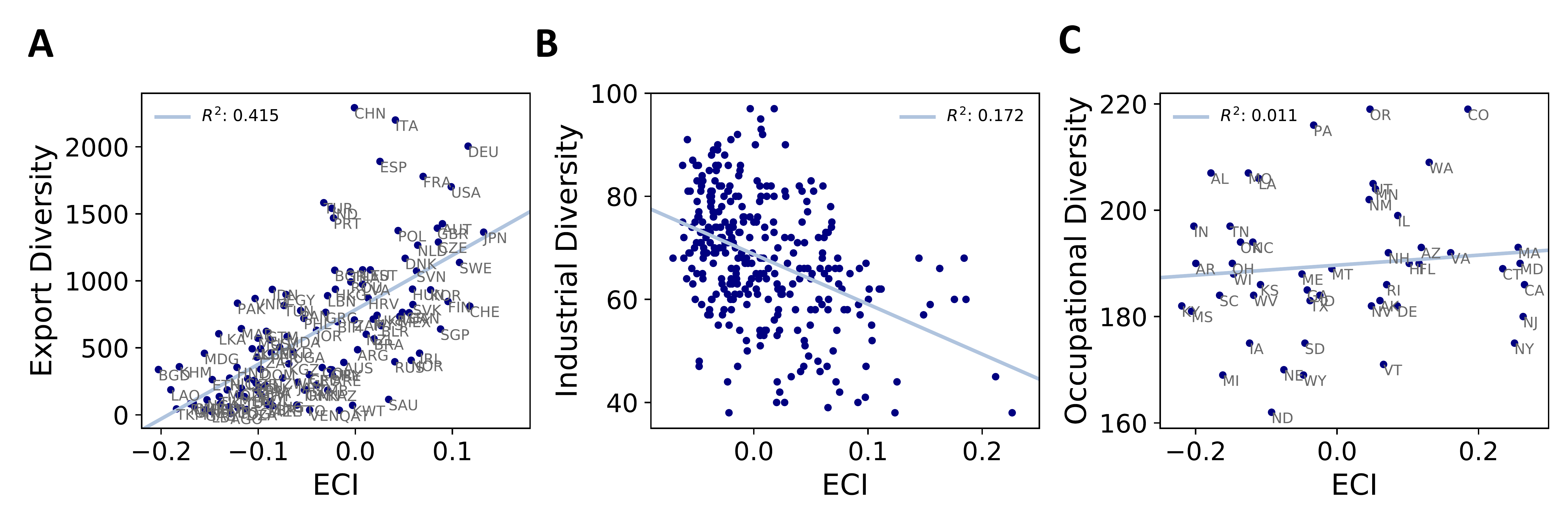}
	\caption{Relationship between diversity and the ECI for data on: \textbf{Panel A)} countries and exports; \textbf{Panel B)} UK regions and industries; \textbf{Panel C)} US states and occupations. }
	\label{fig:div_eci}
\end{figure}

The mathematical orthogonality between the ECI and diversity indicates that these variables capture different information \cite{kemp2014interpretation}. In the export data, the ECI and diversity both provide useful economic insights. In particular, previous work has shown that ordering the rows of matrix $M$ by country diversity and the columns by product ubiquity reveals a triangular structure \cite{hausmann2011network} (see Panel A of Figure \ref{fig:triangular}). This pattern indicates that more diverse countries tend to export less ubiquitous products while less diverse countries tend to export more ubiquitous products in a sharp contrast to traditional theories of comparative advantage \cite{hausmann2011network}.

\begin{figure}[H]
	\centering
	\includegraphics[width=1\textwidth]{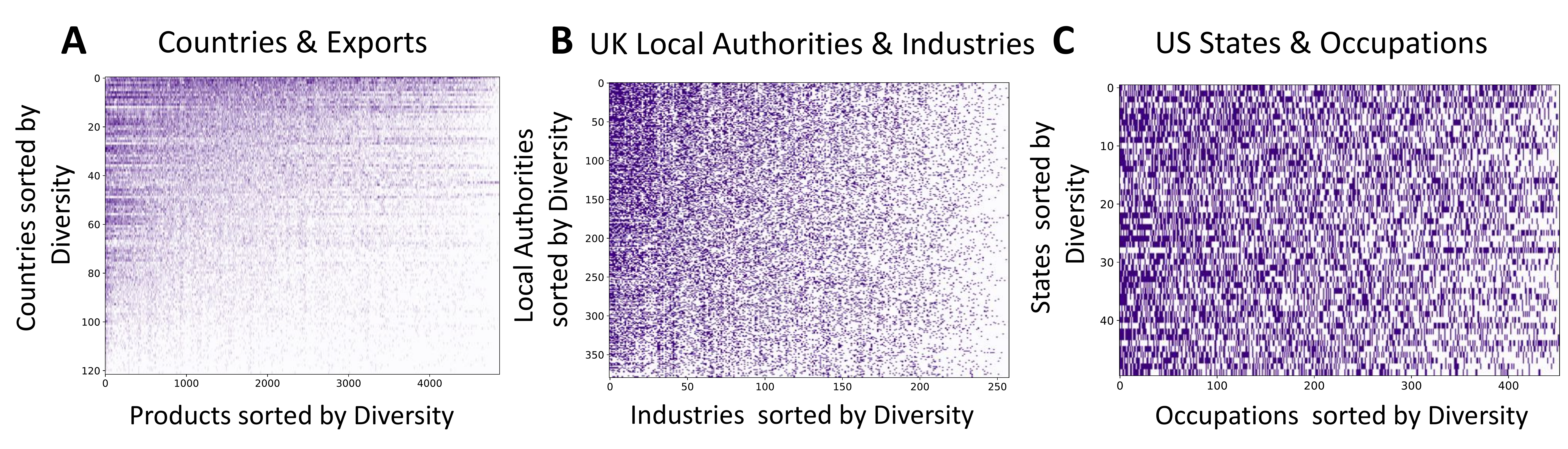}
	\caption{In each matrix, rows are sorted by diversity and columns are sorted by ubiquity: \textbf{Panel A)} country-product $M$ matrix; \textbf{Panel B)} UK region-industry $M$ matrix; \textbf{Panel C)} US state-occupation $M$ matrix.}
	\label{fig:triangular}
\end{figure}

However, in both of our regional examples, diversity and ubiquity fail to be economically informative. As we can see in Panels B and C of Figure \ref{fig:triangular}, the diversity and ubiquity ordering of matrix $M$ constructed from US and UK regional data does not reveal a triangular structure. Moreover, as shown in Figure \ref{fig:div_gdp}, while country diversity is positively correlated with per capita GDP in the export data (Panel A), there is no positive correlation between diversity and per capita earnings in the UK (Panel B) or per capita state-level GDP in the US (Panel C).

\begin{figure}[H]
	\centering
	\includegraphics[width=1\textwidth]{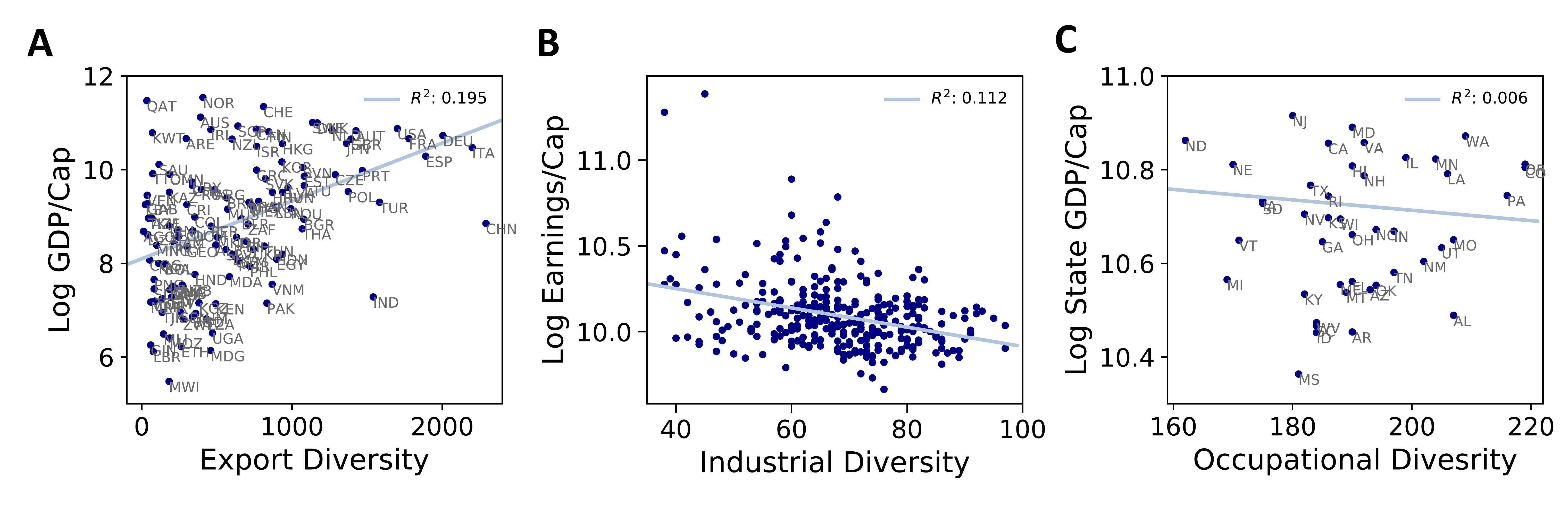}
	\caption{\textbf{Panel A)} Relationship between diversity and log GDP per capita for data on countries and exports. \textbf{Panel B)} Relationship between diversity and log per capita earnings for data on industrial employment concentrations in UK local authorities. \textbf{Panel C)} Relationship between diversity and log GDP per capita for for data on occupational employment concentrations in US states. }
	\label{fig:div_gdp}
\end{figure}

\section{Discussion}

This paper provides a number of mathematical interpretations of the ECI and PCI and shows how these interpretations offer useful economic insights in export and regional data. Our results also cast existing empirical findings in a new light. Previously, the success of the ECI in explaining variation in per capita GDP and future growth rates across countries was thought to reflect the importance of accumulating a diverse set of productive capabilities \cite{hidalgo2009building,hausmann2011network,hausmann2014atlas}. However, by making the difference between the ECI and diversity explicit, we can better understand the distinct roles these variables play in the development process. 

The relationship between diversification and development is well established in the economics literature. Countries tend to follow a U-shaped pattern, whereby they first diversify and then begin specializing relatively late in the development process \cite{imbs2003stages}. This pattern aligns with other empirical studies that have described a positive association between export diversification and economic growth, which tends to be stronger for less developed countries \cite{al2000export, herzer2006does, hesse2009export}.

In contrast to diversity, the ECI and PCI reveal additional information about the type of exports that countries at lower and higher income levels specialize in. High PCI products (which tend to be exported by richer, high ECI countries) have been shown to relate to chemical and machinery exports that require technologically sophisticated know-how and advanced manufacturing processes, while low PCI products (which tend to be exported by poorer, low ECI countries) correspond to simple agricultural products or raw minerals  \cite{hausmann2014atlas}. While the importance of technological upgrading for growth and development is also well recognized within the economics literature \cite{lall1992technological,lall2000technological,lall2006sophistication}, our interpretation of the ECI and PCI as similarity measures sheds new empirical light on the technological differences in the export-baskets of low- and high-income countries. 

The mathematical connections between the economic complexity measures and spectral clustering also open the door for further applications of dimensionality reduction methods to other economic datasets. Indeed, as we have shown with our illustration of UK and US employment data, the ECI and PCI reveal similar patterns of specialization across richer and poorer regions. Interestingly, in these two particular examples, we find that the type -- rather than the diversity -- of industries and occupations concentrated in a region appears to matter more for regional economic prosperity. Future work could readily extend the economic complexity measures to examine other economic networks, such as production networks constructed from country input-output data.  Moreover, the relationships between the ECI, diffusion maps, \cite{coifman2006diffusion,yen2011link}, and simple correspondence analysis \cite{zha2001bipartite} (some of which are discussed in the \textbf{SM}), suggest that new insights could be gleaned from applications of nonlinear diffusion maps and multiple correspondence analysis to economic data.

\section{Materials and Methods}
\subsection{Calculating the ECI for UK and US regional employment data}

\subsubsection{UK Local Authorities and Industries}
Using data from the UK Business Register and Employment Survey (BRES), we construct a binary \textit{region-industry matrix} $W$ on the basis of a region $r$'s \textit{Location Quotient} (LQ) in industry $i$

\begin{equation}\label{eq:LQ}
LQ_{ri} = \frac{ e_{ri}/ \sum_ie_{ri}} { \sum_{r}e_{ri}/ \sum_{r}\sum_i e_{ri} },
\end{equation}
where $e_{ri}$ is the number of people employed in industry $i$ in region $r$ and $W_{ri} = 1$ if $LQ_{ri} > 1$ and $LQ_{ri} = 0$ otherwise. Note that Eq.~\eqref{eq:LQ} is analogous to Eq.~\eqref{eq:RCA}. We then construct a $\widetilde{W}$ matrix from $W$ in the same way as $\widetilde{M}$ is constructed from $M$ (Eq.~\ref{eq:DMUM}). Finally, we calculate the \textit{industry-based} ECI for UK Local Authorities by finding the eigenvector associated with the second-largest eigenvalue of $\widetilde{W}$.

\subsubsection{US States and Occupations}
We apply the same methodology to calculate the \textit{occupation-based} ECI for US states. (We also find consistent results using data on US states and industries.) Drawing on census data for the US, which is available from the Integrated Public Use Microdata Series (IPUMS) \cite{ruggles2017ipums}, we construct a \textit{state-occupation matrix} using state's location quotient in \textit{occupation} $i$. We then compute the \textit{occupation-based} ECI for US states analogously to the industry-based ECI for UK Local Authorities.
\newpage
\beginsupplement

\section*{Supplementary Material}
\setcounter{section}{0}
\section{Diversity and degree equivalence}
Recall that diversity is defined as:
\begin{equation}
	k_{c}^{(0)} = \sum_p M_{cp}
	\label{eq:diversity}
	\end{equation}
and the degree of a node in a graph defined by a similarity matrix $S$ is
\begin{equation}\label{eq:degrees}
	d_i = \displaystyle\sum_{j} S_{ij},
	\end{equation}
We want to show that these are equivalent. We defined $S=MU^{-1}M'$. Note that $U^{-1}M'$ is row-stochastic and $D^{-1}M$ is also row-stochastic. Therefore, any row of $\widetilde{M}=D^{-1}MU^{-1}M'$ adds up to 1 and hence every row $i$ of $MU^{-1}M'$ must add up to $D_{ii}$.

\section{Relationship between the ECI and PCI}
\begin{proposition}
	A country's ECI is equal to the average PCI of products that the country has revealed comparative advantage in.
\end{proposition}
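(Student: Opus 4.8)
The plan is to exploit the fact that the ECI matrix $\widetilde{M}=D^{-1}MU^{-1}M'$ and the PCI matrix $\widehat{M}=U^{-1}M'D^{-1}M$ are assembled from the same two factors. Writing $A=D^{-1}M$ and $B=U^{-1}M'$, we have $\widetilde{M}=AB$ and $\widehat{M}=BA$. The central idea is the elementary linear-algebra fact that $AB$ and $BA$ share the same nonzero eigenvalues, and that their eigenvectors are carried into one another by left-multiplication by $A$ (respectively $B$). Concretely, if $\widehat{y}^{[2]}$ denotes the PCI vector, so that $BA\,\widehat{y}^{[2]}=\widehat{\lambda}\,\widehat{y}^{[2]}$ with $\widehat{\lambda}$ the second-largest eigenvalue, then applying $A$ on the left gives
\begin{equation}
AB\,(A\widehat{y}^{[2]}) = A\,(BA\,\widehat{y}^{[2]}) = \widehat{\lambda}\,(A\widehat{y}^{[2]}),
\end{equation}
so $A\widehat{y}^{[2]}=D^{-1}M\widehat{y}^{[2]}$ is an eigenvector of $\widetilde{M}$ with the same eigenvalue $\widehat{\lambda}$.

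First I would confirm that this transported vector is genuinely the ECI. Because both $\widetilde{M}$ and $\widehat{M}$ are row-stochastic, their largest eigenvalue is $1$ with a constant right eigenvector, and the computation above matches their entire nonzero spectra; in particular the second-largest eigenvalues coincide. Assuming this eigenvalue is simple (the case in which the ECI and PCI are well defined), the $\widehat{\lambda}$-eigenspace of $\widetilde{M}$ is one-dimensional, so $\widetilde{y}^{[2]}$ must be proportional to $D^{-1}M\widehat{y}^{[2]}$. I would also check that $A\widehat{y}^{[2]}\neq 0$: were it to vanish, then $\widehat{\lambda}\,\widehat{y}^{[2]}=BA\widehat{y}^{[2]}=B(A\widehat{y}^{[2]})=0$, forcing $\widehat{y}^{[2]}=0$ since $\widehat{\lambda}\neq 0$, a contradiction.

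With the scaling fixed so that equality (not merely proportionality) holds, we obtain $\widetilde{y}^{[2]}=D^{-1}M\widehat{y}^{[2]}$. Reading this off componentwise gives
\begin{equation}
\widetilde{y}^{[2]}_c = \big(D^{-1}M\widehat{y}^{[2]}\big)_c = \frac{1}{k_c^{(0)}}\sum_{p} M_{cp}\,\widehat{y}^{[2]}_p = \frac{\sum_p M_{cp}\,\widehat{y}^{[2]}_p}{\sum_p M_{cp}},
\end{equation}
which is precisely the average of the PCI values $\widehat{y}^{[2]}_p$ taken over the products $p$ for which $M_{cp}=1$, i.e. the products in which country $c$ has revealed comparative advantage. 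This establishes the claim.

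The main obstacle I anticipate is the bookkeeping around normalization and the simplicity of the second eigenvalue, rather than the algebra itself. The $AB$/$BA$ correspondence pins down eigenvectors only up to scale, so the sharp statement ``ECI equals the average PCI'' is really a statement about a consistent choice of scaling between the two indices; I would state explicitly that we take $\widehat{y}^{[2]}$ as given and \emph{define} the matching ECI normalization through $\widetilde{y}^{[2]}=D^{-1}M\widehat{y}^{[2]}$, which is legitimate precisely because the relevant eigenspaces are one-dimensional. A symmetric argument using $B=U^{-1}M'$ recovers the dual statement (a product's PCI is a ubiquity-weighted average of the ECI of its exporters), which serves as a useful consistency check.
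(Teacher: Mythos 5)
Your proposal is correct, and it arrives at the same key identity as the paper, namely $\widetilde{y}^{[2]} = D^{-1}M\widehat{y}^{[2]}$, whose componentwise reading gives the averaging statement. But your route to that identity is meaningfully better than the paper's. The paper starts from the ECI eigensystem $D^{-1}MU^{-1}M'\widetilde{y} = \widetilde{\lambda}\widetilde{y}$ and left-multiplies both sides by $M^{-1}D$ to land on the PCI eigensystem --- a step that presumes $M$ is invertible, which it generally is not: $M$ is a rectangular binary country-by-product matrix, so $M^{-1}$ does not even exist. Your argument runs in the opposite direction and sidesteps this entirely: you start from the PCI eigensystem $BA\,\widehat{y}^{[2]} = \widehat{\lambda}\,\widehat{y}^{[2]}$ with $A = D^{-1}M$, $B = U^{-1}M'$, and apply $A$ on the left, which is the standard (and fully rigorous) $AB$/$BA$ intertwining argument requiring no invertibility whatsoever. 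You also fill in details the paper glosses over: that $A\widehat{y}^{[2]} \neq 0$, that the second-largest eigenvalues of the two row-stochastic matrices coincide, that simplicity of that eigenvalue pins the eigenvector down up to scale, and that the proposition's ``equality'' is really a statement about choosing consistent normalizations --- a point the paper never addresses, even though it reports the ECI only up to standardization. Finally, you make explicit the last step the paper leaves implicit: reading off
\begin{equation}
\widetilde{y}^{[2]}_c = \frac{\sum_p M_{cp}\,\widehat{y}^{[2]}_p}{\sum_p M_{cp}},
\end{equation}
which is literally the claimed average. In short: same skeleton, but your version is the one that actually constitutes a proof for general (non-square, non-invertible) $M$.
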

\begin{proof}
	
	Recall that 
	\begin{equation}\label{eq:mtilde}
	\widetilde{M} = D^{-1}MU^{-1}M'.
	\end{equation}
	
	The ECI is one of the solutions $\widetilde{y}$ to the following eigensystem:
	\begin{equation}\label{eq:eigenECI}
	\widetilde{M}\widetilde{y} = \widetilde{\lambda}\widetilde{y}.
	\end{equation}
	
	To calculate the PCI for all products, we are interested in the second eigenvector of the matrix $\widehat{M}$, which is given by
	
	\begin{equation}
	\widehat{M} = U^{-1}M'D^{-1}M.
	\end{equation}
	
	Hence, PCI is one of the solutions $\widehat{y}$ to the following eigensystem:
	\begin{equation}\label{eq:PCI}
	\widehat{M}\widehat{y} = \widehat{\lambda}\widehat{y}.
	\end{equation}
	To prove the proposition, take Eq.~\eqref{eq:eigenECI}, the eigensystem for ECI, and substitute in Eq. \eqref{eq:mtilde}:
	\begin{equation}
	D^{-1}MU^{-1}M'\widetilde{y} = \widetilde{\lambda}\widetilde{y},
	\end{equation}
	\begin{equation}
	M^{-1}DD^{-1}MU^{-1}M'\widetilde{y} = \widetilde{\lambda}M^{-1}D\widetilde{y},
	\end{equation}
	\begin{equation}
	U^{-1}M'\widetilde{y} = \widetilde{\lambda}M^{-1}D\widetilde{y},
	\end{equation}
	which is equivalent to the eigensystem for PCI, 
	\begin{equation}
	U^{-1}M'D^{-1}M\widehat{y} = \widehat{\lambda}\widehat{y}
	\end{equation}
	for 
	\begin{equation}
	\widetilde{y}=D^{-1} M \widehat{y}.
	\end{equation}
	as required.
\end{proof}

Therefore, the ECI can be immediately obtained from the PCI by using $M$. Moreover, note that all the eigenvalues of $\widehat{M}$ and $\widetilde{M}$ are the same.

\section{Interpretation of ECI as a diffusion map and relationships to correspondence analysis and kernel principal component analysis}
A \emph{diffusion map} is a dimensionality reduction method that generates representations of complex data sets in a lower-dimensional Euclidean space by iterating the Markov matrix associated with the data \cite{coifman2006diffusion,nadler2006diffusion}. Since $\widetilde{M}$ can be seen as a Markov transition matrix (see section \ref{sec:indices}), the ECI can also be used to construct a \emph{basic diffusion map} that indicates how a random walker beginning at a particular node (or Markov chain ``state'') will move through the system \cite{yen2011link}. 

For example, if we let the nodes in graph $S$ represent states in a Markov transition matrix, the probability that a random walk beginning in state $i$ reaches state $j$ in the next step is given by $\widetilde{M}_{ij}$. Now consider two random walks beginning in states $i$ and $j$. How ``far'' the random walks are from each other at time $t$ tells us something about the similarity of nodes $i$ and $j$ in graph $S$. Let vector ${x}_i(t)$ denote the probability distribution over states reached at time $t$ by a random walk beginning in state $i$. Then define the 
\emph{diffusion map distance} to be proportional to 

\begin{equation}
({x}_i(t)-{x}_j(t))' D^{-1}(({x}_i(t)-{x}_j(t)).
\end{equation}

Each states at time $t$ can be represented as a point in an $n$-dimensional Euclidean space with coordinates \[(|\lambda^t_2|y^{[2]}_i,|\lambda^t_3|y^{[3]}_i,\ldots, \lambda^t_n|y^{[n]}_i)\]
where $\lambda_j$ is the eigenvalue associated with the $j^{\text{th}}$ largest eigenvector $\widetilde{M}$ and $y^{[n]}_i$ is the $i^{\text{th}}$ entry of the $n^{\text{th}}$ largest eigenvector of $\widetilde{M}$ \cite{yen2011link}. The distance between the points is precisely the diffusion map distance.

In Figure \ref{fig:diff_map}, we apply the diffusion map to country export data. By using the second and third coordinates of the diffusion map, we visualize countries in a two-dimensional plane at different $t$. Since the second largest eigenvalue is dominant, we rescale the axis by its value. As $t$ goes to infinity, the diffusion map distance captures the distance between the stationary probabilities of states in the random walk and is well approximated by the second-largest eigenvector of $\widetilde{M}$ i.e. ECI \cite{nadler2006diffusion}.

\begin{figure}[H]
	\centering
	\includegraphics[width=1\textwidth]{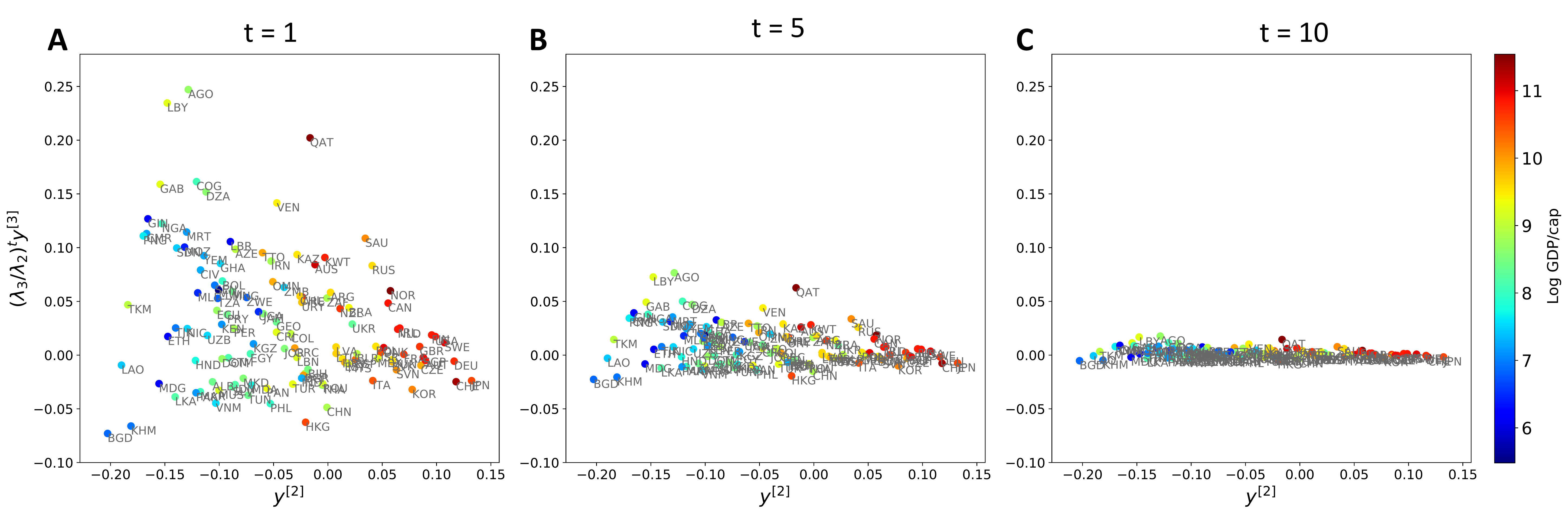}
	\caption{Application of diffusion map interpretation to country export data.}
	\label{fig:diff_map}
\end{figure}

There is also an equivalence between the $Ncut$ criterion and \emph{correspondence analysis} (CA) \cite{zha2001bipartite}. Simple (multiple) CA is a classic tool in multivariate analysis that studies relationships between two (two or more) categorical variables, such as countries and products, via singular value decomposition \cite{mardia1979multivariate,greenacre1984correspondence,beh2014correspondence,hill2014correspondence}. In this setting, the similarity matrix $S$ represents the \emph{Pearson correlation matrix}. Performing simple correspondence analysis is equivalent to computing the basic diffusion map when $t=1$ \cite{yen2011link}.

Finally, diffusion maps are also related to \emph{kernel Principal Component Analysis} (PCA). Define 
\begin{equation}
	K(t)=\widetilde{M}^tD^{-1}\widetilde{M}^{'t}
	\end{equation}
which is a symmetric, positive-definite matrix known as the \emph{diffusion map kernel}. Denote $w^{[n]}$ to be an eigenvector of $K$ associated with $\mu_n$, the $n^\text{th}$ largest eigenvalue. Each state at time $t$ can be represented in an $n$-dimensional Euclidean space with coordinates 
\begin{equation} (\sqrt{\mu_1}w^{[1]}_{i},\sqrt{\mu_2}w^{[2]}_{i},\ldots,\sqrt{\mu_n}w^{[n]}_{i}).
\end{equation}  

This is not only a vector representation of each in the principal component space, but also the distance between the points is exactly the diffusion map distance. 

A clear summary of relationships between different spectral methods for computing a low-dimensional embedding of undirected weighted graphs can be found in \cite[Table 10.1, p. 439]{fouss2016algorithms}.


\section{ECI and PCI Rankings for Regional Data}
In this section, we show the top and bottom ECI and PCI rankings for UK local authorities (Table \ref{tab:eci_LA_table}), UK industries (Table \ref{tab:pci_industry_table}), US states (Table \ref{tab:eci_state_table}) and US occupations (Table \ref{tab:pci_occ_table}). 


\begin{table}[H]
	\centering
	\includegraphics[width=0.7\textwidth]{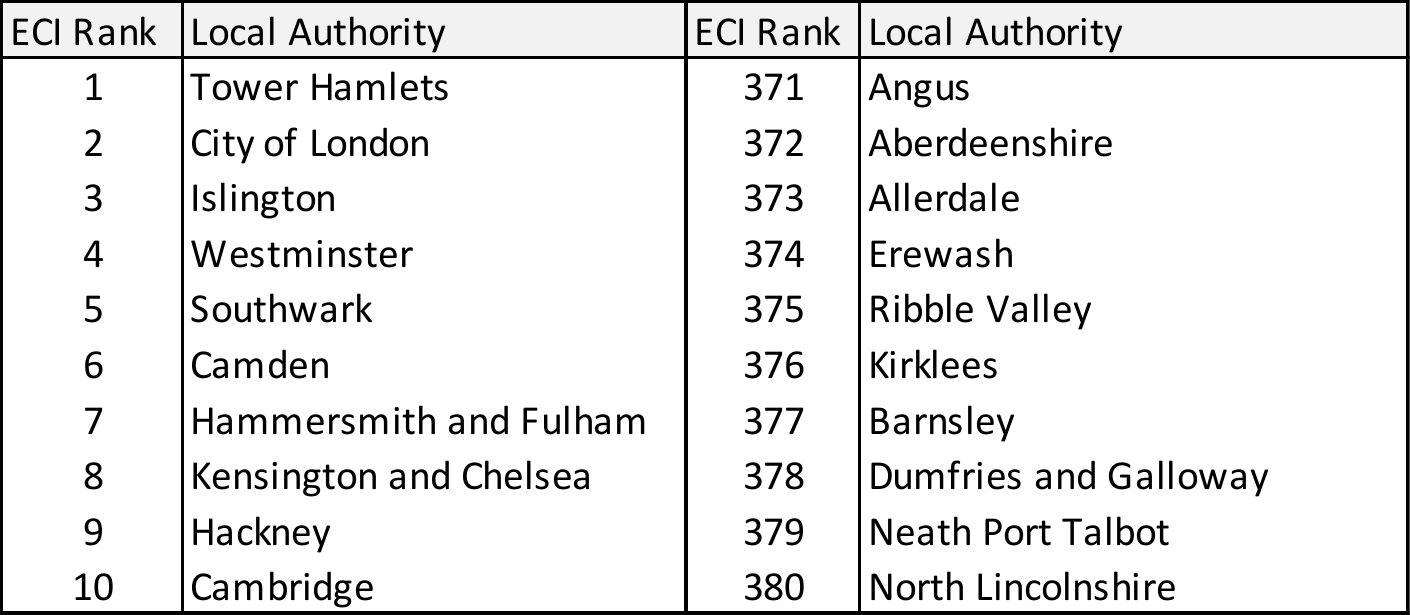}
	\caption{Top and bottom 10 UK local authorities ranked by ECI.}
	\label{tab:eci_LA_table}
\end{table}

\begin{table}[H]
	\centering
	\includegraphics[width=1\textwidth]{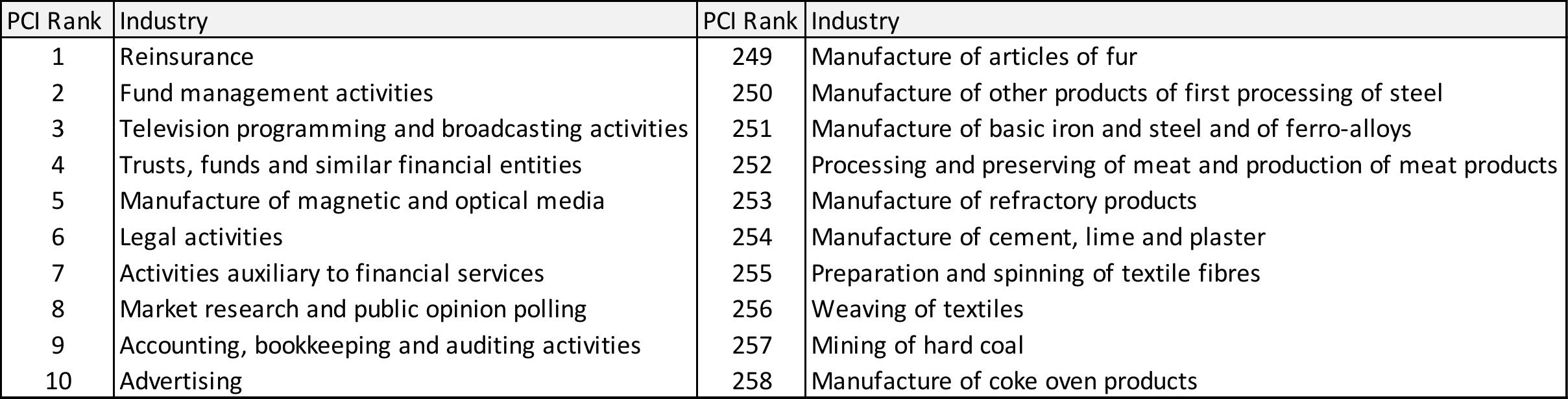}
	\caption{Top and bottom 10 industries ranked by PCI.}
	\label{tab:pci_industry_table}
\end{table}

\begin{table}[H]
	\centering
	\includegraphics[width=0.7\textwidth]{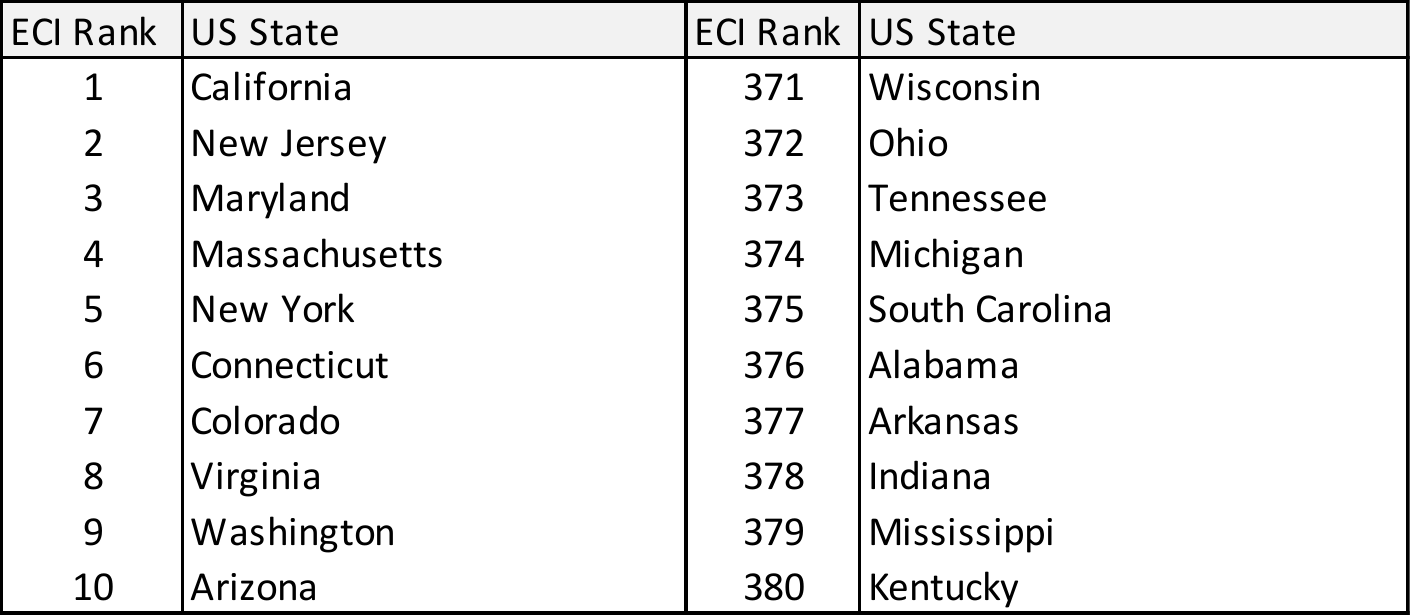}
	\caption{Top and bottom 10 US states ranked by ECI.}
	\label{tab:eci_state_table}
\end{table}

\begin{table}[H]
	\centering
	\includegraphics[width=1\textwidth]{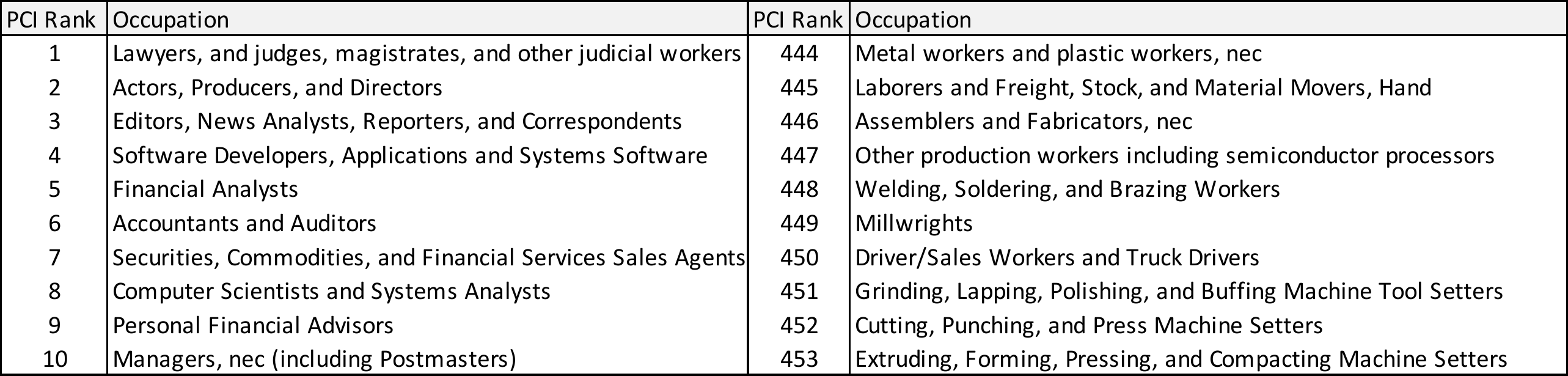}
	\caption{Top and bottom 10 occupations ranked by PCI.}
	\label{tab:pci_occ_table}
\end{table}

\section{Eigengap Heuristic Analysis}
In section \ref{sec:dimred} of the main paper, we showed that similarity networks constructed from the export and regional datasets did not partition well into two clusters. Here we analyse what is known as the \emph{eigengap heuristic}, which is a standard methodology used in spectral clustering analysis for determining the number of clusters present in the graph \cite{vonluxburg2007tutorial}.

The eigengap heuristic involves choosing the number of $k$ clusters such that the largest eigenvalues $\lambda_1, ..., \lambda_k$ of $\widetilde{M}$ are large, while $\lambda_{k+1}$ is relatively small. In Figure \ref{fig:eigengap_heuristic}, we show the largest six eigenvalues of the $\widetilde{M}$ matrix calculated for data on exports, UK regional industrial concentrations, and US state occupational concentrations respectively. In all three cases, the largest gap occurs between the first and second eigenvalue ($|\lambda_2 - \lambda_1|$). According to the eigengap heuristic, this suggests that from a spectral clustering perspective the graphs considered in this paper are likely to only contain one cluster. However, it is also important to note that the eigengap heuristic usually only works well if the data contains well-pronounced clusters - which is not the case here.

\begin{figure}[H]
	\centering
	\includegraphics[width=0.7\textwidth]{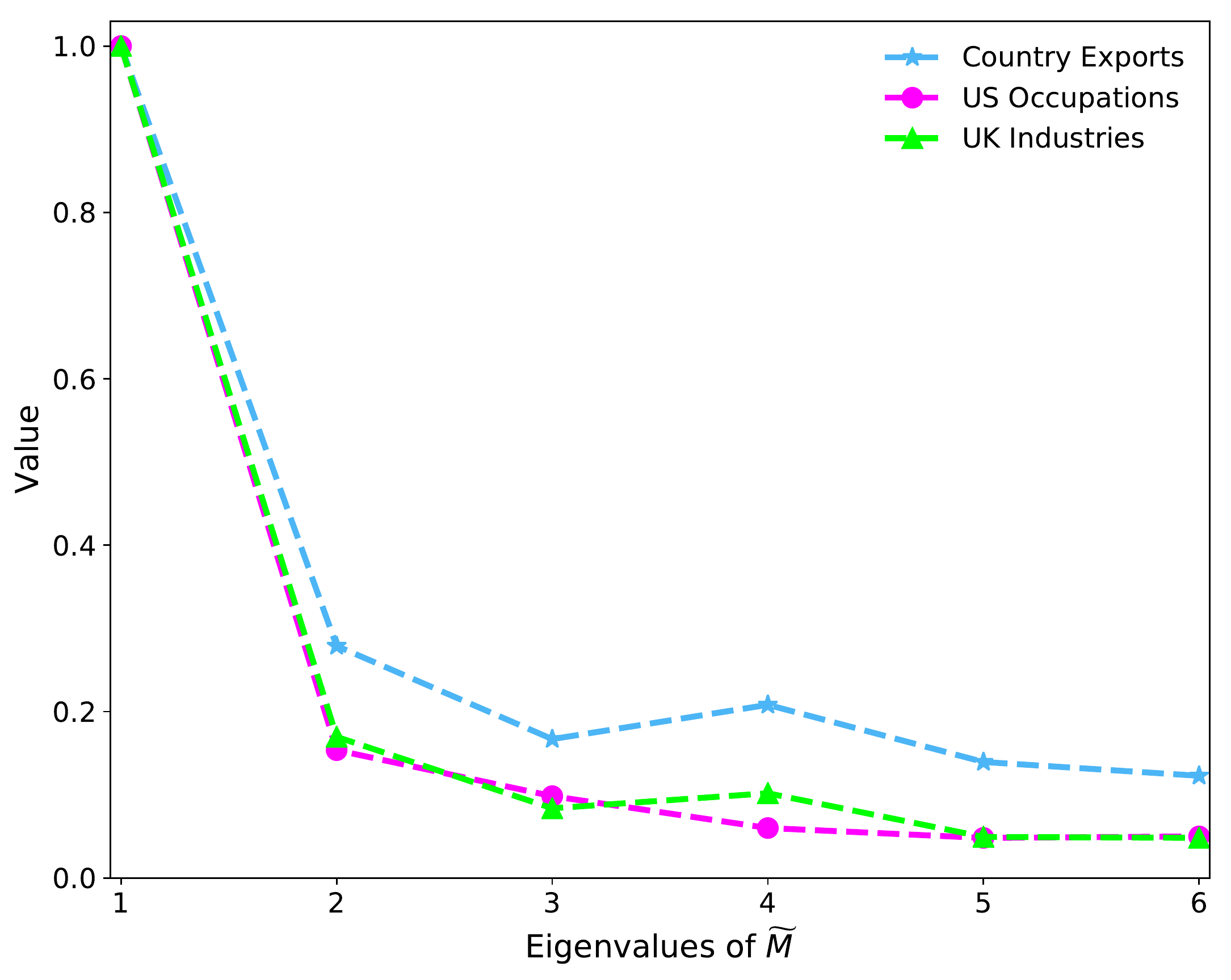}
	\caption{Top largest eigenvalues of the $\widetilde{M}$ matrix for data on exports, UK regional industrial concentrations and US state occupational concentrations.}
	\label{fig:eigengap_heuristic}
\end{figure}

\section{Robustness of empirical results to alternative RCA thresholds}
In principle, the use of the RCA measure to calculate the binary $M$ matrix can be particularly sensitive to the chosen threshold above which a country is considered to have a revealed comparative advantage in a product. For the empirical results shown in the main paper we have followed the most common approach and used a threshold of $1$. While the choice of threshold will have no bearing on the mathematical interpretation of the ECI and PCI, in this section we show to the extent to which empirical results for the country-export data are influenced by different RCA thresholds. 

In Panel A of Figure \ref{fig:panel_RCA_thresh_correlation}, we show how the empirical correlation between the ECI and per capita GDP change for different RCA thresholds. Correlations are highest between thresholds of 0.5 and 2. Panel B shows the correlation between the ECI and country diversity for different RCA thresholds. 

When the RCA threshold is zero, there are some products that are competitively exported by all countries. This means that the multiplicity of the largest eigenvalue is greater than one. In this case, since the eigenvector corresponding to the largest eigenvalue is proportional to diversity, the eigenvector corresponding to the second-largest eigenvalue is also proportional to diversity. Therefore, when the RCA threshold is zero, there is a perfect correlation between ECI and diversity.

\begin{figure}[H]
	\centering
	\includegraphics[width=1\textwidth]{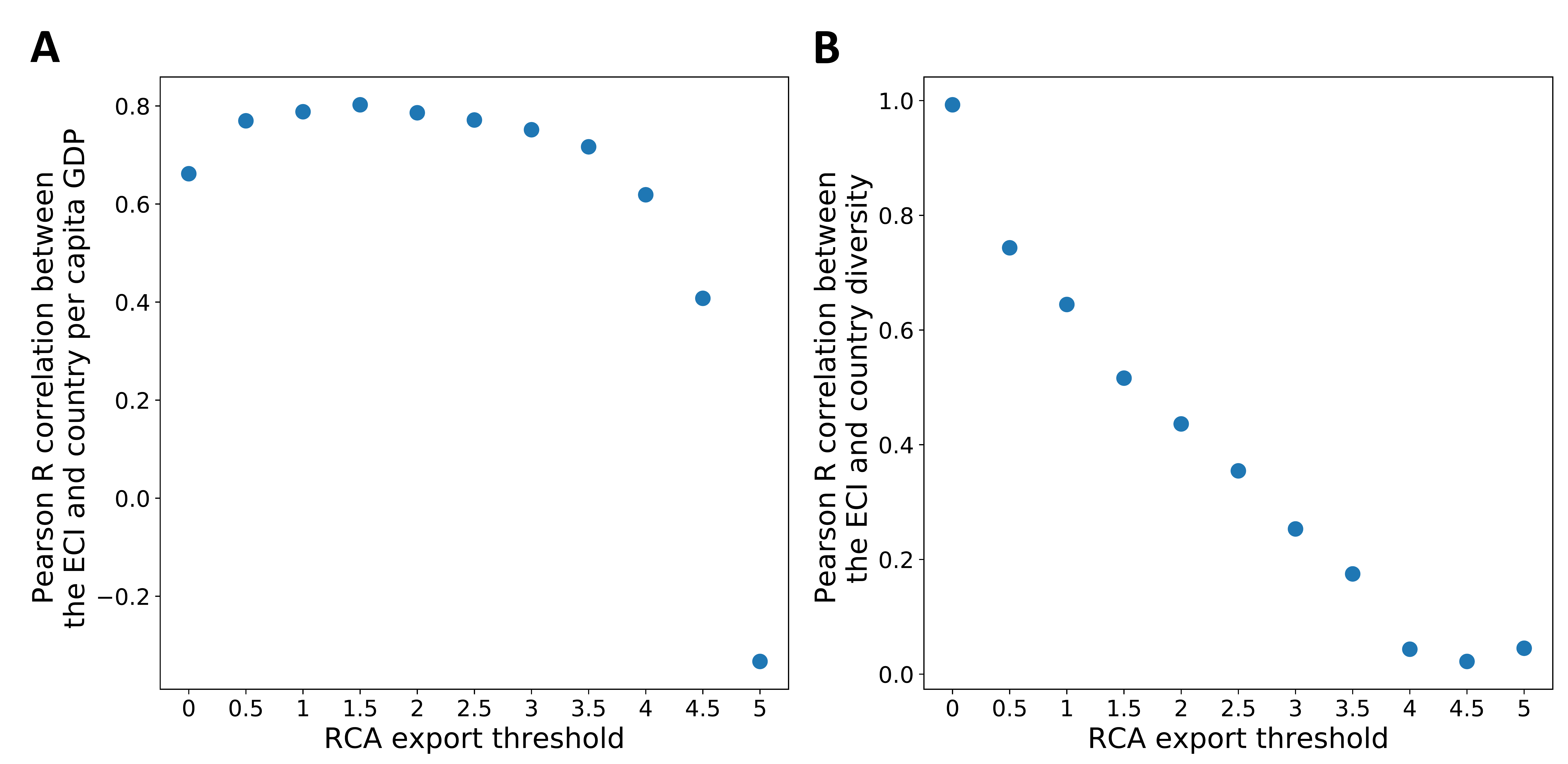}
	\caption{\textbf{Panel A)} Pearson correlation between the ECI and country per capita GDP for different RCA export thresholds. \textbf{Panel B)} Pearson correlation between the ECI and diversity for different RCA export thresholds}
	\label{fig:panel_RCA_thresh_correlation}
\end{figure}

Figure \ref{fig:panel_RCA_thresh_banding} examines how the pattern of specialization revealed by the ECI and PCI changes for different RCA thresholds. Here we compare binary $M$ matrices, each sorted by ECI and PCI, using RCA thresholds of $0.5$, $1$ and $2$. The pattern becomes more triangular for the lower RCA threshold (Panel A), largely because the ECI ordering is becoming closer to the ordering given by diversity (see Panel B of Figure \ref{fig:panel_RCA_thresh_correlation}). The higher RCA threshold (Panel C) shows a similar pattern of specialization to the original pattern shown in Panel B. 

\begin{figure}[H]
	\centering
	\includegraphics[width=1\textwidth]{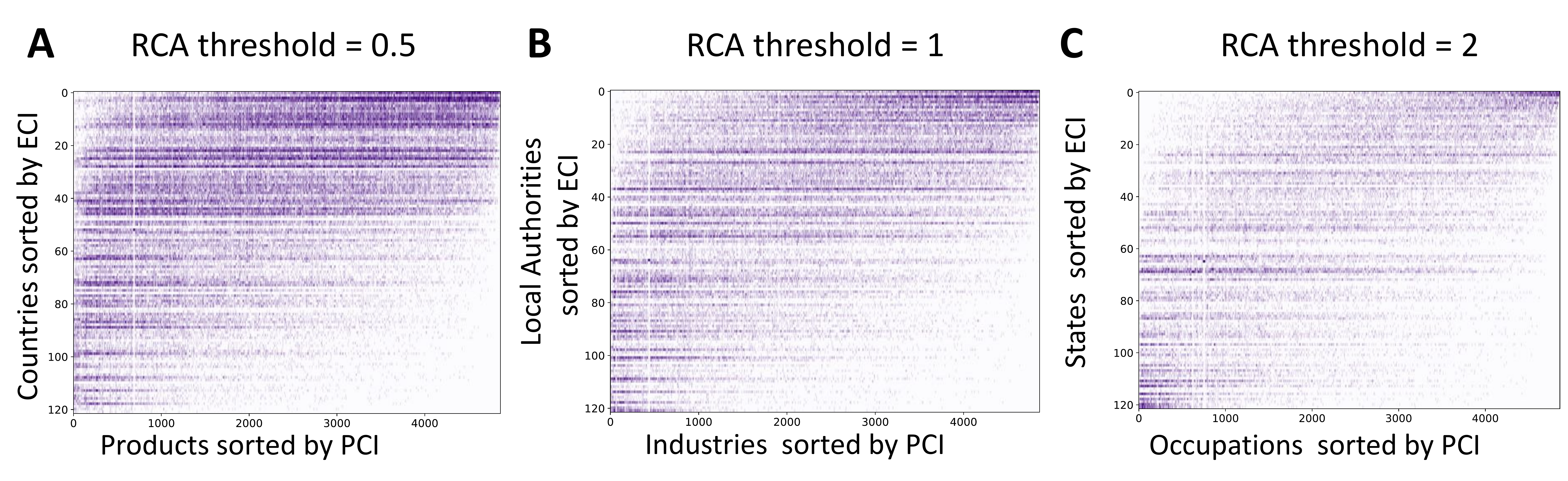}
	\caption{Country-product $M$ matrix with rows sorted by the ECI and columns sorted by the PCI constructed using different RCA thresholds:  \textbf{Panel A)} RCA threshold $=0.5$;  \textbf{Panel B)} RCA threshold $=1$;  \textbf{Panel C)} RCA threshold $=2$;}
	\label{fig:panel_RCA_thresh_banding}
\end{figure}

We also follow the approach taken in \cite{busto2012snestedness} and examine how the correlation between the ECI, per capita GDP and diversity change using a ``per capita'' version of RCA ($RCA\_POP$), given by

\begin{equation}\label{eq:RCA_pop}
RCA\_POP_{cp} = \frac{ x_{cp}/ n_{c}} { \sum_{c}x_{cp}/ \sum_{c} n_{c} },
\end{equation}
where $x_{cp}$ is country $c$'s exports of product $p$, $n_c$ is the population of country c and $M_{cp} = 0$ otherwise.

\begin{figure}[H]
	\centering
	\includegraphics[width=1\textwidth]{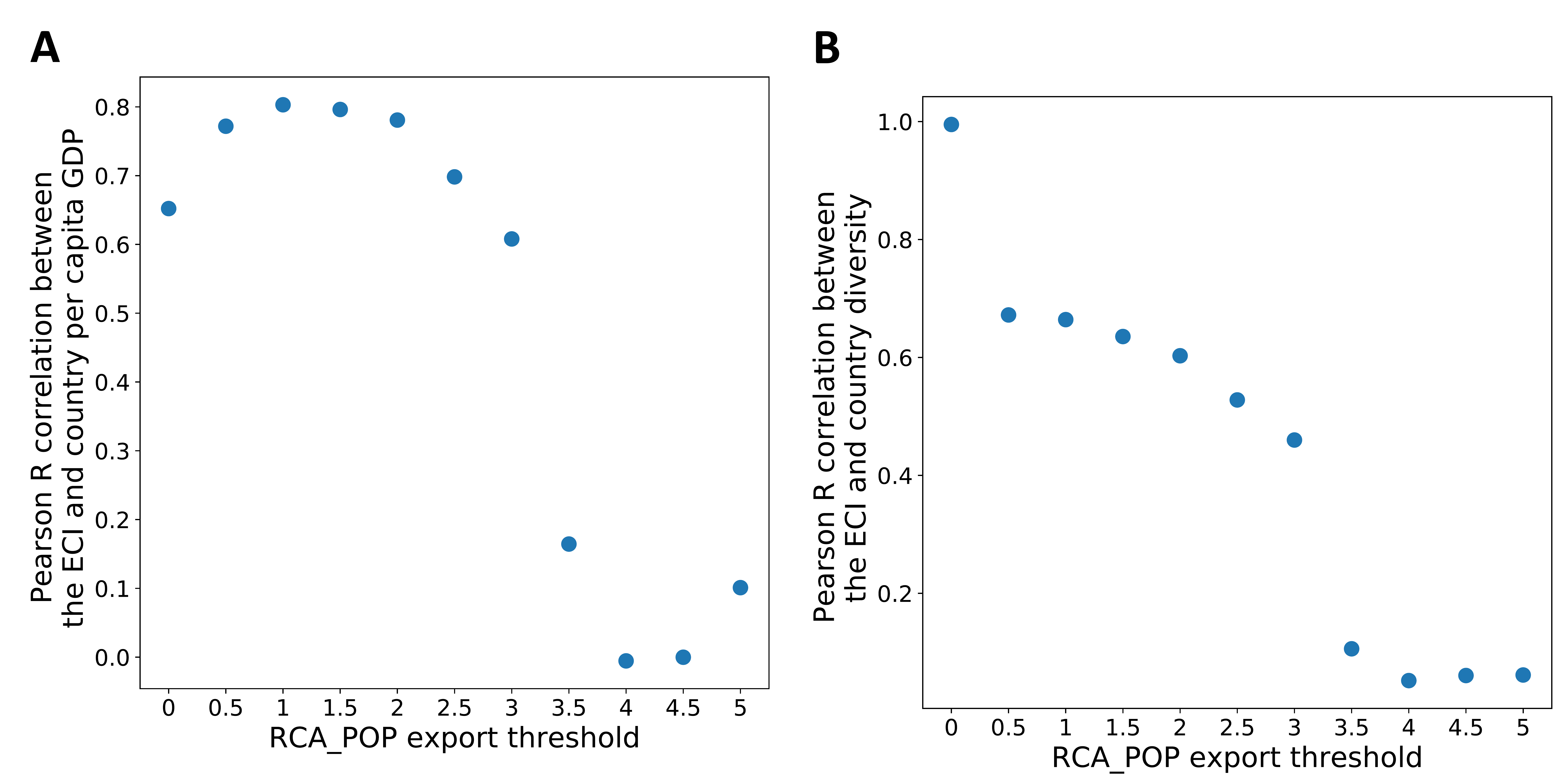}
	\caption{\textbf{Panel A)} Pearson correlation between the ECI and country per capita GDP for different per capita RCA thresholds. \textbf{Panel B)} Pearson correlation between the ECI and diversity for different per capita RCA  thresholds}
	\label{fig:panel_RCA_pop_thresh_correlation}
\end{figure}

Our results suggest that regardless of whether the per capita or original RCA version is applied, a threshold of 1 gives a strong correlation to per capita GDP. Moreover, the correlation between diversity and the ECI decreases as the RCA threshold is increased. 

\newpage

\newpage
\section*{Acknowledgements }
We would  like to thank Simon Angus, R. Maria Del Rio Chanona, Neave O'Clery, Devavrat Shah,  Muhammad Yildrim,  Ricardo Hausmann, Eric Kemp-Benedict, Luciano Pietronero and four anonymous reviewers for valuable comments and feedback on earlier drafts of this paper.

\subsection*{Funding}
This project was supported by Partners for the New Economy and the Oxford Martin School project on the Post-Carbon Transition.

\subsection*{Author contributions statement}

P.M. carried out the data analysis. All authors devised the research, wrote and revised the manuscript text. 
\subsection*{Data and materials availability}
All data needed to evaluate the conclusions in the paper is available from authors upon request.

\subsection*{Additional information}

\textbf{Competing financial interests} The authors declare that they have no competing interests

\end{document}